\documentclass[letterpaper,10pt,conference]{ieeeconf}

\IEEEoverridecommandlockouts
\newif\ifshowcomments
\showcommentstrue % uncomment to show comments
\usepackage{amsmath, amsfonts, amssymb, mathtools, dsfont, circledsteps, bm}

\usepackage{amsthm}
\usepackage{adjustbox}
\usepackage{bbm}
\usepackage{stmaryrd}
\usepackage{comment}
\usepackage{xcolor}
\usepackage{todonotes}
\usepackage{comment}
\usepackage{environ}
\let\labelindent\relax
\usepackage{enumitem}
\setlist[description]{font=\normalfont\itshape}
\usepackage[font=footnotesize]{caption}
\usepackage{subcaption}
\usepackage{cite}
\usepackage{hyperref}
\usepackage{enumitem}
\usepackage{placeins}
\usepackage[capitalise]{cleveref}

\renewcommand{\baselinestretch}{0.97}
\newtheorem{theorem}{Theorem}

\newtheorem{corollary}{Corollary}
\newtheorem{lemma}{Lemma}
\theoremstyle{definition}
\newtheorem{definition}{Definition}
\newtheorem{problem}{Problem}
\theoremstyle{remark}
\newtheorem{remark}{Remark}
\newtheorem{interpret}{Interpretation}
\newtheorem{example}{Example}
\newtheorem{assumption}{Assumption}
\newtheorem{notation}{Notation}

\newsavebox{\fitmathbox}
\NewEnviron{fitdisplaymath}{%
    \begingroup
    \sbox{\fitmathbox}{$\displaystyle \BODY$}%
    \[
    \ifdim\wd\fitmathbox>\linewidth
        \resizebox{\linewidth}{!}{$\displaystyle \BODY$}%
    \else
        \usebox{\fitmathbox}%
    \fi
    \]
    \endgroup
}

\newcommand{\R}{\mathbb R}

\newcommand{\dd}{\,\mathrm d}

\newcommand{\F}{\mathcal F}
\newcommand{\face}{\trianglelefteqslant}

\newcommand{\res}[3]{{#1}_{#2 \face #3}}
\renewcommand{\epsilon}{\varepsilon}
\renewcommand{\phi}{\varphi}

\definecolor{techgold}{HTML}{B3A369}
\definecolor{techblue}{HTML}{003057}
\definecolor{buzzgold}{HTML}{EAAA00}
\definecolor{greymatter}{HTML}{54585A}
\definecolor{uforange}{HTML}{FA4616}
\definecolor{ufblue}{HTML}{0021A5}
\definecolor{shale}{HTML}{54585A}

\usepackage{autonum}

\title{\bf Multi-Agent System Identification with Nonlinear Sheaf Diffusion}
\author{Nivar Anwer, Hans Riess, and Matthew Hale%
\thanks{Anwer is with the Department of Computer Science, Georgia Tech. Riess and Hale are with the Department of Electrical \& Computer Engineering, Georgia Tech.
Riess was supported by DARPA (HR0011-25-3-0235); Riess and Hale
were supported by AFOSR (FA9550-23-1-0120, FA9550-19-1-0169) and
ONR (N00014-22-1-2435).
Emails: \texttt{\{nanwer3,riess,mhale30\}@gatech.edu}.}}
\date{}

\begin{document}

\maketitle
\raggedbottom

\begin{abstract}
    Local interaction laws governing multi-agent systems can be difficult to recover from trajectory data, even when the dynamics are observed faithfully. In systems governed by a nonlinear sheaf Laplacian---a generalization of the graph Laplacian accommodating heterogeneous state spaces and asymmetric communication channels---the coordination law is encoded by edge potential functions whose gradients produce the inter-agent forces. Because trajectory observations record node-state evolution, they expose only the aggregate effect of the edge forces at each node: distinct interaction laws that agree at the node level are indistinguishable from trajectory data alone. We show that the fundamental obstruction to recovery is topological, measured by sheaf cohomology, and that unique recovery from an unconstrained function class is possible if and only if this cohomology vanishes. When the obstruction is nontrivial, we show that recovery within a finite-dimensional parameterized class is possible precisely when a data-dependent information matrix is positive definite. Experiments validate the theory and illustrate that accurate trajectory reproduction need not certify recovery of the underlying interaction law.
\end{abstract}

\section{Introduction}

% In multi-agent systems (MAS) encompassing robotic swarms, opinion networks, and distributed estimation, collective behavior emerges from local interaction laws---rules specifying what neighboring agents compare and how disagreement is converted into a coordinating force. The classical framework, grounded in spectral graph theory~\cite{chungSpectralGraphTheory1997,saberAgreementProblemsNetworks2003,OlfatiSaber2007,Oh2015}, models these laws via a graph Laplacian acting uniformly on a shared state space.
In multi-agent systems (MAS) encompassing robotic swarms, opinion networks, and distributed estimation, collective behavior emerges from \emph{local interaction laws}---rules specifying what neighboring agents compare and how disagreement is converted into a coordinating force. The classical framework, grounded in spectral graph theory~\cite{chungSpectralGraphTheory1997,saberAgreementProblemsNetworks2003,OlfatiSaber2007,Oh2015}, models these laws via a graph Laplacian acting uniformly on a shared state space.
Cellular sheaves~\cite{currySheavesCosheavesApplications2014} and their spectral theory~\cite{hansenSheafLaplacianSpectral2019} extend this classical framework to the heterogeneous setting: a sheaf over a graph assigns distinct local state spaces and shared comparison spaces to vertices and edges (respectively), with restriction maps specifying how agent states project into communication channels. The nonlinear sheaf Laplacian, formed by composing these restriction maps with edge potentials, subsumes consensus, formation control, opinion dynamics, and target tracking within a single operator~\cite{hanksDistributedMultiAgentCoordination2025,zhaoAsynchronousNonlinearSheaf2025,hanksHeterogeneousMultiAgentMultiTarget2025}. Sheaf Laplacians have also attracted growing attention in graph learning and structural causal models~\cite{bodnarNeuralSheafDiffusion2022,battiloroTangentBundleConvolutional2024,zaghenSheafDiffusionGoes2024,hansenGebhartSheafNeuralNetworks2020,dacuntoLearningConsistentCausal2026}---works that learn the sheaf operator for a downstream task, in contrast to our inverse problem of recovering a fixed edge potential from trajectory data.

In this paper, we study the inverse problem: when can the local interaction law be recovered by observing trajectories of a MAS? A growing literature addresses this question for interacting particle systems~\cite{Lu2019,Li2021Identifiability,luMaggioniTangHeterogeneous2021,Miller2023}, establishing identifiability and consistency under coercivity or excitation conditions, yet predominantly assuming \emph{homogeneous} interactions over a shared state space. A related line of work learns topological structure from data: Hansen \& Ghrist learn sheaf Laplacians from smooth signals~\cite{hansenLearningSheafLaplacians2019,dininoLearningStructureConnection2026}, while other approaches recover graph topology from observed dynamics~\cite{segarraNetworkInferenceConsensus2017,zhuConsensusUnknown2020,talukdarPhysicsInformedTopology2020}.
While the first sheaf cohomology is well understood as an obstruction to coordination \cite{hansenOpinionDynamicsDiscourse2021}, its role as an obstruction to recovering a nonlinear edge potential from trajectory data in a dynamical MAS setting has not yet been characterized.

The present paper addresses this gap. Our contributions are as follows.
\begin{enumerate}
    \item We show that the fundamental obstruction to edge-law recovery from trajectory data is topological: the space of edge forces invisible to node-level observations coincides with $\ker\delta^*$, which by the Hodge decomposition is isomorphic to the first sheaf cohomology $H^1(G;\mathcal{F})$. Recovery from an unconstrained function class is possible if and only if $H^1(G;\mathcal{F}) = 0$.
    \item When $H^1(G;\mathcal{F}) \neq 0$, we prove that identifiability is restored within a finite-dimensional parameterized class whenever a data-dependent information matrix is positive definite.
    \item Experiments on formation transfer, bounded-confidence dynamics, and finite-basis models validate the theory across all three failure modes: cohomological ambiguity, weak excitation, and basis rank deficiency.
\end{enumerate}

The remainder of the paper is organized as follows. Section~\ref{sec:preliminaries} introduces Euclidean sheaves, the nonlinear sheaf Laplacian, and the system identification problem. Section~\ref{sec:nonparametric} establishes the fundamental limitations of nonparametric recovery. Section~\ref{sec:parametric} develops the parametric identifiability criterion and least-squares estimator. Section~\ref{sec:experiments} presents experiments, and Section~\ref{sec:conclusion} concludes.

%%%%%%%%%%%%%%%%%%%%%%%%%%%%%%%%%%%%%%%%%%%%%%%%%%%%%%%%%%%
\section{Preliminaries \& Problem Formulation}
\label{sec:preliminaries}
%%%%%%%%%%%%%%%%%%%%%%%%%%%%%%%%%%%%%%%%%%%%%%%%%%%%%%%%%%%%

We introduce the background material on cellular sheaves of inner-product spaces over graphs and the nonlinear sheaf Laplacian.

\subsection{Euclidean Sheaves}

Suppose $G = (V,E)$ is a directed graph with finite vertex-set $V$, finite edge-set $E$, and maps $h,t: E \rightrightarrows V$ sending an edge $e$ to its head $h(e)$ or tail $t(e)$.
We define a partially ordered set $\mathsf{P}(G) \coloneqq (V \sqcup E, \face)$ on the disjoint union generated by the relation: $v \face e$ whenever $e \in h^{-1}(v) \cup t^{-1}(v)$.
This partial order encodes incidence relations between vertices and edges of the graph and will be utilized to index spaces and linear maps in the following definition.

\begin{definition}
    A \emph{Euclidean sheaf} over $G = (V,E)$ is an assignment
    %\footnote{For experts, a \emph{functor} $\F: \mathsf{P}(G) \to \mathsf{Hilb}$ where $\mathsf{Hilb}$ is the category of finite-dimensional real Hilbert spaces and bounded linear maps.}
    of the following to $G$:
    \begin{description}
        \item[Stalks]: A real inner-product space $\F(v)$ to every $v \in V$ with $\langle x_v,x_v', \rangle_v \coloneqq x_v^\top R_v x_v'$, $R_v \succ 0$.
        \item[Edge stalks]: A real inner-product space $\F(e)$ to every $e \in E$ with $\langle y_e,y_e', \rangle_e \coloneqq y_e^\top R_e y_e'$, $R_e \succ 0$.
        \item[Restriction maps]: A pair of bounded linear maps
            \begin{equation}
                \begin{aligned}
                    \F(h(e)) \xrightarrow{\res{\F}{h(e)}{e}} \F(e), \quad \F(t(e)) \xrightarrow{\res{\F}{t(e)}{e}} \F(e)
                \end{aligned}
            \end{equation}
            for every $e \in E$.
    \end{description}
\end{definition}

\begin{interpret}
    In a MAS, $G = (V,E)$ models the communication or flow of information between agents: $V$ indexes the agents of the system and $E$ indexes communication channels between pairs of agents. $\F(v)$ is the local state space of an agent $v \in V$. $\F(e)$ is a shared communication space for the edge $e \in E$.
    The restriction map $\res{\F}{v}{e}$ models the transmission of information about the state of an agent to a channel. The positive-definite matrices $R_v$ and $R_e$ define weighted inner products on the stalk and edge-stalk spaces, allowing different coordinates to be weighted according to their relative importance in the communication channel.
\end{interpret}

\begin{definition}
    Suppose $\F$ is a Euclidean sheaf over $G$. Let
    \begin{equation}
        \begin{aligned}
        C^0(G;\F) \coloneqq \textstyle\bigoplus_{v \in V} \F(v), \quad
        C^1(G;\F) \coloneqq \textstyle\bigoplus_{e \in E} \F(e)
        \end{aligned}
    \end{equation}
    denote the spaces of 0-cochains and 1-cochains, respectively, with inner products
    \begin{equation}
        \begin{aligned}
            \langle x, x' \rangle_{C^0} \coloneqq \textstyle\sum_{v \in V} \langle x_v, x'_v \rangle_v, \quad \langle y, y' \rangle_{C^1} \coloneqq \textstyle\sum_{e \in E} \langle y_e, y'_e \rangle_e
        \end{aligned}
    \end{equation}
   % \na{should this not be $y_e$ instead of $y_v$? since $y$ in $C^1(G;F)$ has edge-indexed components?}~\hr{good catch!}
    The \emph{coboundary map} of $\F$ is a bounded linear map $\delta_{\F}: C^0(G;\F) \to C^1(G;\F)$ defined by
    \begin{equation}
        (\delta_{\F} x)_e \coloneqq \res{\F}{h(e)}{e} x_{h(e)} - \res{\F}{t(e)}{e} x_{t(e)}.
    \end{equation}
    The kernel $\ker \delta_{\F} \subseteq C^0(G;\F)$ is the space of \emph{global sections} $H^0(G;\F)$. The \emph{first sheaf cohomology} is $H^1(G;\F) := C^1(G;\F) / \mathrm{im}\,\delta_{\F}$.
\end{definition}

\begin{notation}
    In coordinates, $\delta_\F$ (written $\delta$ when $\F$ is clear from context) is represented by a $d_1 \times d_0$ matrix $B$, and its Hilbert-space adjoint satisfies $\delta^\ast = M_1^{-1} B^\top M_2$, where $M_1 \succ 0$ and $M_2 \succ 0$ are the block-diagonal Gram matrices of $C^0(G;\F)$ and $C^1(G;\F)$, respectively.
\end{notation}
\begin{remark}
    $H^1(G;\F)$ is a quotient vector space whose elements are equivalence classes $[y] = y + \mathrm{im}\,\delta_\F$.
\end{remark}

\begin{interpret}
    In a MAS, $C^0(G;\F)$ is the (global) state space of the system. $\delta_\F$ simultaneously measures the disagreements between all agents in communication channels. $H^0(G;\F)$ collects global states $x \in C^0(G;\F)$ that are in agreement, i.e.~$\res{\F}{h(e)}{e}(x_{h(e)}) = \res{\F}{t(e)}{e}(x_{t(e)})$. $H^1(G;\F)$ measures obstructions to agreement.
\end{interpret}

A Hodge Decomposition \cite{eckmannHarmonischeFunktionenUnd1944} applies to Euclidean sheaves over graphs, implying the following lemma.

\begin{lemma}[Hodge decomposition, \cite{hansenSheafLaplacianSpectral2019}]
    \label{lem:hodge}
    $C^1(G;\F) = \mathrm{im}\,\delta \oplus \ker\,\delta^*$, so $\ker\,\delta^* \cong H^1(G;\F)$ as inner-product spaces.
\end{lemma}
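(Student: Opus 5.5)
The plan is the standard finite-dimensional orthogonal-complement argument, carried out with the weighted inner products of $C^0(G;\F)$ and $C^1(G;\F)$. Since $V$ and $E$ are finite and each stalk is a finite-dimensional real inner-product space, $C^1(G;\F)$ is a finite-dimensional inner-product space. The coboundary $\delta$ is linear and has a Hilbert-space adjoint $\delta^*: C^1 \to C^0$, characterized by $\langle \delta x, y\rangle_{C^1} = \langle x, \delta^* y\rangle_{C^0}$. In coordinates this adjoint is $\delta^* = M_1^{-1}B^\top M_2$, as in the Notation.

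The first step is to identify $(\mathrm{im}\,\delta)^\perp$ with $\ker\delta^*$. For $y \in C^1$ we have $y \perp \mathrm{im}\,\delta$ if and only if $\langle \delta x, y\rangle_{C^1}=0$ for all $x\in C^0$. By adjointness this is equivalent to $\langle x,\delta^* y\rangle_{C^0}=0$ for all $x$. Since $\langle\cdot,\cdot\rangle_{C^0}$ is positive definite (each $R_v\succ 0$), this holds exactly when $\delta^*y=0$.

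Next, in a finite-dimensional inner-product space every subspace $W$ satisfies $C^1 = W\oplus W^\perp$. This follows by taking an orthonormal basis of $W$ (Gram--Schmidt with respect to $\langle\cdot,\cdot\rangle_{C^1}$, which is positive definite since $R_e\succ 0$) and writing $y = P_W y + (y-P_W y)$. The intersection $W\cap W^\perp$ is trivial by positive definiteness. Applying this with $W = \mathrm{im}\,\delta$ gives $C^1(G;\F) = \mathrm{im}\,\delta \oplus \ker\delta^*$, and the sum is orthogonal.

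For the isomorphism, I will restrict the quotient map $\pi: C^1\to H^1(G;\F)=C^1/\mathrm{im}\,\delta$ to $\ker\delta^*$. It is injective, because $\ker\pi = \mathrm{im}\,\delta$ meets $\ker\delta^*$ only in $0$. It is surjective, because any $y$ decomposes as $y=\delta x + h$ with $h\in\ker\delta^*$, and then $[y]=[h]$.

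To make "as inner-product spaces" meaningful, I will equip $H^1$ with the quotient inner product, defined by $\|[y]\| = \min_{z\in[y]}\|z\|_{C^1}$ and polarized. The minimizer in each coset is exactly its harmonic representative $h$, because $\|\delta x + h\|^2 = \|\delta x\|^2+\|h\|^2$ by orthogonality. Hence $\pi|_{\ker\delta^*}$ is an isometry.

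The only step needing care is this last one. Since $H^1$ was defined merely as a quotient vector space, I must specify the inner product it inherits; the orthogonality established above is what makes the identification isometric rather than just a linear isomorphism.
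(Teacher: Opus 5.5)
Your argument is correct. The paper does not prove this lemma itself; it imports the Hodge theorem of Hansen--Ghrist, which is phrased via the degree-one Hodge Laplacian. Over a graph there are no $2$-cells, so that Laplacian is $\delta\delta^*$. Its kernel equals $\ker\delta^*$ because $\langle \delta\delta^* y, y\rangle_{C^1} = \|\delta^* y\|_{C^0}^2$, and the cited decomposition reduces to $C^1(G;\F) = \mathrm{im}\,\delta \oplus \ker\delta^*$, with harmonic cochains representing $H^1(G;\F)$.

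You bypass the Laplacian and get $(\mathrm{im}\,\delta)^\perp = \ker\delta^*$ directly from adjointness in the weighted inner products. This is the same linear algebra in its most elementary form, and it covers everything the paper uses downstream (e.g.\ $\ker\delta^* = 0 \iff H^1(G;\F) = 0$).

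Your final step also supplies something the paper leaves unstated. Since $H^1(G;\F)$ is defined only as a quotient vector space, ``as inner-product spaces'' requires choosing an inner product on it. Your quotient inner product makes the identification the canonical isometry $\pi|_{\ker\delta^*}$, rather than the trivial fact that equal-dimensional inner-product spaces are isometric.

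One ordering point: polarization yields an inner product only because $\|[y]\| = \|Py\|_{C^1}$, where $P$ is the orthogonal projection onto $\ker\delta^*$ (linear and constant on cosets). So establish that identity before you polarize.
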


%%%%%%%%%%%%%%%%%%%%%%%%%%%%%%%%%%%%%%%%%%%%%%%%%%%%%%%%%%%%
\subsection{Dynamical Systems on Sheaves}
\label{sec:sheaf-laplacian}
%%%%%%%%%%%%%%%%%%%%%%%%%%%%%%%%%%%%%%%%%%%%%%%%%%%%%%%%%%%%

We now turn to dynamical systems governed by a Laplace operator defined for a given sheaf over $G$.

\begin{definition}
    Suppose $\Phi: C^1(G;\F) \to C^1(G;\F)$ is bounded and continuous. The \emph{nonlinear sheaf Laplacian} is the operator $L^\Phi_\F: C^0(G;\F) \to C^0(G;\F)$ defined by $L^\Phi_\F \coloneqq \delta^\ast \circ \Phi \circ \delta$.
\end{definition}

The linear sheaf Laplacian $L_\F := \delta^* \circ \delta$ (corresponding to $\Phi = \mathrm{id}$) was introduced and studied in \cite{hansenSheafLaplacianSpectral2019}. Solutions of $\dot{x} = -L_\F x$ converge exponentially to the orthogonal projection of $x(0)$ onto $\ker L_\F = H^0(G;\F)$ \cite[Proposition 8.1]{hansenSheafLaplacianSpectral2019}.
The nonlinear case was first studied in \cite{hansenOpinionDynamicsDiscourse2021}, which establishes convergence to $H^0(G;\F)$ for the special case of odd, monotone $\Phi$  \cite[Proposition 10.1]{hansenOpinionDynamicsDiscourse2021}. We consider a more general assumption about $\Phi$.

\begin{assumption} \label{assume:Phi}
   There exists $U: C^1(G;\F) \to \R$ such that $\nabla U = \Phi$. Moreover, $U: C^1(G;\F) \to \R$ can be written as a sum $U(y) = \textstyle\sum_{e \in E} U_e(y_e)$.
\end{assumption}

Under \Cref{assume:Phi}, we have the following convergence result for the nonlinear sheaf diffusion equation.

\begin{theorem}[\hspace{-0.5pt}{\cite[Theorem 2]{hanksDistributedMultiAgentCoordination2025}}]
    \label{thm:hanks_convergence}
    Let $\F$ be a Euclidean sheaf on $G$ and suppose each edge potential $U_e: \F(e) \to \R$ is strongly convex with unique minimizer $b_e \in \F(e)$. Then for any diffusivity $\alpha > 0$, trajectories of
    \begin{equation}
        \dot{x} = -\alpha\, L_{\F}^{\nabla U}(x)
        \label{eq:hanks_dynamics}
    \end{equation}
    converge to the orthogonal projection of $x(0)$ onto
    \begin{equation}
        \delta_{\F}^+ b + H^0(G;\F) = \ker L_{\F}^{\nabla U},
    \end{equation}
    where $\delta_{\F}^+ := (\delta_{\F}^* \delta_{\F})^{-1} \delta_{\F}^*$ denotes the Moore--Penrose pseudoinverse of $\delta_{\F}$ (restricted to $(\ker\delta_{\F})^\perp$).
\end{theorem}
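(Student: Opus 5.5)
The plan is to treat \eqref{eq:hanks_dynamics} as a gradient flow on $C^0(G;\F)$ and split it along the orthogonal decomposition $C^0(G;\F) = H^0(G;\F) \oplus (\ker\delta)^\perp$.

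\emph{Step 1 (gradient structure).} Define $V(x) \coloneqq U(\delta x)$. By the chain rule, with gradients taken in the $\langle\cdot,\cdot\rangle_{C^0}$ and $\langle\cdot,\cdot\rangle_{C^1}$ inner products, we get $\nabla V(x) = \delta^*\nabla U(\delta x) = L^{\nabla U}_\F(x)$. Hence \eqref{eq:hanks_dynamics} is $\dot x = -\alpha\nabla V(x)$, and $V$ is nonincreasing along trajectories.

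\emph{Step 2 (invariance of the $H^0$ component).} Let $\Pi$ be the orthogonal projection onto $H^0(G;\F)=\ker\delta$. Since $\mathrm{im}\,\delta^* = (\ker\delta)^\perp$, we have $\Pi\dot x = -\alpha\Pi\delta^*\nabla U(\delta x) = 0$. So $\Pi x(t) = \Pi x(0)$ for all $t$. Writing $x = \Pi x(0) + x_\perp$, we get $\delta x = \delta x_\perp$. The component $x_\perp$ therefore obeys the gradient flow of $W \coloneqq V|_{(\ker\delta)^\perp}$ inside $(\ker\delta)^\perp$.

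\emph{Step 3 (strong convexity and exponential convergence).} Each $U_e$ is strongly convex, so $U$ is $\mu$-strongly convex on $C^1(G;\F)$ for some $\mu>0$. The map $\delta$ is injective on $(\ker\delta)^\perp$, so $\|\delta z\|_{C^1}^2 \ge \lambda_{\min}^+\|z\|_{C^0}^2$ there, where $\lambda_{\min}^+$ is the smallest positive eigenvalue of $\delta^*\delta$. Consequently $W$ is $\mu\lambda_{\min}^+$-strongly convex. It has a unique minimizer $x_\perp^\star$, and a standard Lyapunov argument with $\tfrac12\|x_\perp - x_\perp^\star\|^2$ gives exponential convergence $x_\perp(t)\to x_\perp^\star$. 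Thus $x(t)\to \Pi x(0) + x_\perp^\star$.

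\emph{Step 4 (identification of the limit set).} The set $\ker L^{\nabla U}_\F$ consists exactly of the critical points of the convex function $V$, which are its minimizers. By Step 3 these form the affine set $x_\perp^\star + H^0(G;\F)$. Because $x_\perp^\star\perp H^0$, the point $\Pi x(0)+x_\perp^\star$ is precisely the orthogonal projection of $x(0)$ onto this set.

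\emph{Main obstacle.} The remaining and hardest step is showing $x_\perp^\star = \delta^+ b$. The map $\delta^+ b$ is characterized as the $(\ker\delta)^\perp$-minimizer of $\|\delta x - b\|^2$. When $b\in\mathrm{im}\,\delta$, we have $\delta\delta^+ b = b$, so $\nabla U(\delta\delta^+b)=0$ and the identification is immediate. When $b\notin\mathrm{im}\,\delta$, the first-order condition is $\delta^*\nabla U(\delta x)=0$, i.e.\ $\nabla U(\delta x)\in\ker\delta^*$, the $H^1$-part from \Cref{lem:hodge}. Its solution coincides with $\delta^+ b$ only if $\nabla U$ acts like $y\mapsto y-b$ up to the metric on $C^1$, e.g.\ $U_e(y_e) = \tfrac12\|y_e-b_e\|_e^2$. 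So I would first check whether the cited result implicitly assumes $b\in\mathrm{im}\,\delta$ or quadratic potentials. Failing that, I would state the theorem with the general minimizer $x_\perp^\star$ and note that it equals $\delta^+ b$ under either of those conditions.
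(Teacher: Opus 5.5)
The paper imports this theorem from the cited reference and gives no proof of it, so there is no in-paper argument to compare with. On its own terms, your Steps 1--4 are correct. First, $L^{\nabla U}_\F=\nabla V$ for $V=U\circ\delta$. Second, the $H^0(G;\F)$-component is conserved because $\mathrm{im}\,\delta^*=(\ker\delta)^\perp$. Third, $W=V|_{(\ker\delta)^\perp}$ is $\mu\lambda^+_{\min}$-strongly convex, which gives exponential convergence of $x_\perp$. Finally, $\Pi x(0)+x^\star_\perp$ is the orthogonal projection of $x(0)$ onto $\ker L^{\nabla U}_\F=x^\star_\perp+H^0(G;\F)$. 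You leave well-posedness implicit, but it comes for free. A differentiable convex function on a finite-dimensional space is automatically $C^1$. Monotonicity of $\nabla V$ gives $\tfrac{\mathrm{d}}{\mathrm{d}t}\|x_1-x_2\|^2\le 0$ for any two solutions, hence uniqueness. Your Lyapunov bound gives global existence.

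Your ``main obstacle'' is real and cannot be overcome, because the statement as written is false: nothing ties $b$ to $\mathrm{im}\,\delta$. Take the constant sheaf $\F(v)=\F(e)=\R$ on the directed $3$-cycle $1\to2\to3\to1$, with identity restriction maps and standard inner products. Then $(\delta x)_{e_1}=x_2-x_1$, $(\delta x)_{e_2}=x_3-x_2$, $(\delta x)_{e_3}=x_1-x_3$, and $\mathrm{im}\,\delta=\{y:y_1+y_2+y_3=0\}$. Both $\ker\delta^*\subseteq C^1$ and $H^0(G;\F)\subseteq C^0$ are spanned by $(1,1,1)$. Let $U_{e_k}(\xi)=\tfrac{k}{2}(\xi-1)^2$ for $k=1,2,3$; each is strongly convex with minimizer $b_{e_k}=1$. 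Then $b=(1,1,1)\in\ker\delta^*$, so $\delta^+b=0$ and $\delta^+b+H^0(G;\F)=H^0(G;\F)$. Yet for every constant $x$, $L^{\nabla U}_\F(x)=\delta^*(-1,-2,-3)=(-2,1,1)\neq 0$. The true equilibrium set is $\{x:\delta x=y^\star\}$, where $y^\star=\tfrac{1}{11}(-7,2,5)$ is the minimizer of $U$ on $\mathrm{im}\,\delta$; indeed $\nabla U(y^\star)=-\tfrac{18}{11}(1,1,1)\in\ker\delta^*$. This set is disjoint from $H^0(G;\F)$. In particular, the trajectory from $x(0)=0$ converges to $x^\star_\perp\neq 0$, not to $0$.

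The exact criterion is that $x^\star_\perp=\delta^+b$ if and only if $\nabla U(\delta\delta^+b)\in\ker\delta^*$, where $\delta\delta^+$ is the orthogonal projector onto $\mathrm{im}\,\delta$. Your phrase ``only if $\nabla U$ acts like $y\mapsto y-b$'' therefore overstates necessity. Both sufficient conditions you list are nonetheless right. For $\nabla U(y)=c(y-b)$ with a common $c>0$, $\nabla U(\delta\delta^+b)=-c(b-\delta\delta^+b)\in(\mathrm{im}\,\delta)^\perp=\ker\delta^*$. Under the added hypothesis $b\in\mathrm{im}\,\delta$ (the consistent-formation case), $\delta\delta^+b=b$ gives $\nabla W(\delta^+b)=\delta^*\nabla U(b)=0$, with $\delta^+b\in(\ker\delta)^\perp$. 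Uniqueness of the minimizer of $W$ then forces $x^\star_\perp=\delta^+b$, and your Steps 2--4 prove the statement verbatim. Without such a hypothesis, your fallback formulation with the general minimizer $x^\star_\perp$ is the correct theorem.
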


\subsection{Problem Formulation}
We now formulate our system identification problem for recovering the edge potentials from trajectories of sheaf dynamics.
The \emph{sheaf diffusion equation} is the ODE
\begin{equation}
    \dot{x} = -L_{\F}^{\Phi}(x) - \Psi(x) \label{eq:dynamics}
\end{equation}
where $\Phi$ (edge field) and $\Psi$ (node field) satisfy \Cref{assume:Phi} and \Cref{assume:Psi}.
% \hr{does it make sense to call these ``node`` and ``edge'' fields?} \na{yess} \hr{removed the $\alpha$ from the \Cref{eq:dynamics}}

% \begin{remark}
%     By \Cref{thm:hanks_convergence}, if $\Psi = 0$, solutions of~\eqref{eq:dynamics} converge to the invariant set $\ker L^{\nabla U}_\F = \delta_{\F}^+ b + H^0(G;\F)$.
%     Once the system has equilibrated, all trajectories lie in this set and $C^1_\mathrm{obs} = \delta(C^0_\mathrm{obs})$ may be confined to a low-dimensional subset of $C^1(G;\F)$, leaving the basis functions $\{\psi_m\}$ poorly excited. Identifiable recovery therefore requires either short-horizon data collected before equilibration, or deliberate perturbations away from the equilibrium manifold.
% \end{remark}

\begin{assumption} \label{assume:Psi}
   There exists $W: C^0(G;\F) \to \R$ such that $\nabla W = \Psi$, and $W$ decomposes as $W(x) = \textstyle\sum_{v \in V} W_v(x_v)$.
\end{assumption}

The following example collects noteworthy edge potential functions, including the laws studied in our experiments in Section~\ref{sec:parametric}.

\begin{example}[Edge potentials]\label{ex:edge_potentials}
    The following classes arise in coordination and opinion dynamics \cite{hansenOpinionDynamicsDiscourse2021,hanksDistributedMultiAgentCoordination2025}:
    \begin{description}
        \item[Quadratic]: $U_e(y_e) = \tfrac{1}{2}\|y_e\|^2_{\mathcal{F}(e)}$, giving $\Phi = \mathrm{id}$ and recovering the linear sheaf Laplacian $L_\mathcal{F}$.
        \item[Shifted quadratic (formation control)]: $U_e(y_e) = \tfrac{1}{2}\|y_e - \hat{y}_e\|^2_{\mathcal{F}(e)}$ for a target 1-cochain $\hat{y} \in C^1(G;\mathcal{F})$. The force $\nabla U_e(y_e) = y_e - \hat{y}_e$ drives the system toward $\delta x = \hat{y}$; when $\hat{y} \in \mathrm{im}\,\delta$ this corresponds to a prescribed formation offset $\hat{p}_{ij}$ between neighbors $i$ and $j$ \cite{hanksDistributedMultiAgentCoordination2025}.
        \item[Bounded confidence (opinion dynamics)]: $U_e(y_e) = \psi_e(\|y_e\|^2_{\mathcal{F}(e)})$ for a smooth $\psi_e:[0,\infty)\to\mathbb{R}$ with $\psi_e'(z) > 0$ for $z < D_e$ and $\psi_e'(z) = 0$ for $z \geq D_e$. Agents communicate only when disagreement is below threshold $D_e$, reproducing Hegselmann--Krause bounded-confidence dynamics on the sheaf \cite{hansenOpinionDynamicsDiscourse2021}.
        \item[Antagonistic]: $U_e(y_e) = -\|y_e\|^2_{\mathcal{F}(e)}$ for edges in a set $E_- \subseteq E$ of adversarial links. The resulting signed Laplacian $L^S_\mathcal{F} = \delta^* S\delta$ (with $S = -I$ on $E_-$, $S = I$ on $E_+$) may be indefinite, producing unstable dynamics when $E_-$ is a cutset \cite{hansenOpinionDynamicsDiscourse2021}.
    \end{description}
\end{example}

With this family of examples in view, we now formulate the central inverse problem: recovering the edge potential from trajectory observations.

\begin{problem}[System Identification] \label{problem:main}
    Given trajectory data $\{x^{(i)}(t)\}_{i=1}^M$ from solutions of \eqref{eq:dynamics} under Assumptions~\ref{assume:Phi}--\ref{assume:Psi}, recover the edge potential $U: C^1(G;\F) \to \R$, or characterize the obstructions to doing so.
\end{problem}

%%%%%%%%%%%%%%%%%%%%%%%%%%%%%%%%%%%%%%%%%%%%%%%%%%%%%%%%%%%%
\section{Non-Parametric Potential Reconstruction}
\label{sec:nonparametric}
%%%%%%%%%%%%%%%%%%%%%%%%%%%%%%%%%%%%%%%%%%%%%%%%%%%%%%%%%%%%
The SI problem reduces to recovering the edge force law $\Phi = \nabla U: C^1(G;\mathcal{F}) \to C^1(G;\mathcal{F})$ from trajectory observations. Throughout this section, we assume $\Psi$ satisfies Assumption~\ref{assume:Psi} with each node potential $W_v$ given as prior knowledge, so that the residual
\begin{equation}
    r(x) := -\dot{x} - \Psi(x) = \delta^* \Phi(\delta x)
    \label{eq:edge_residual_def}
\end{equation}
is directly computable from trajectory data.

% \na{We seem to have dropped $\alpha$ in the residual definitions. from \eqref{eq:dynamics}: section c (proble formulation), $-\dot{x}-\Psi(x)=\alpha\delta^*\Phi(\delta x)$, so we can either define $r(x):=(-\dot{x}-\Psi(x))/\alpha$ or state explicitly that $\alpha=1$ / absorbed into $\Phi$ (i think this would be the simplest fix). this also affects the least-squares and integral-fitting equations.} \hr{Yup, removed the $\alpha$ entirely}
% This is the observability condition under which system identification is feasible.

Suppose trajectories $x^{(i)}:[0,T_i]\to C^0(G;\mathcal{F})$, $i=1,\ldots,M$, are sampled from the dynamical system~\eqref{eq:dynamics} with unknown but fixed $\Phi$ and $\Psi$. Define the observed region in cochain space:
\begin{equation}
    \begin{gathered}
        C^0_\mathrm{obs} := \textstyle\bigcup_{i=1}^M x^{(i)}([0,T_i]) \subseteq C^0(G;\mathcal{F}), \\[0.5em]
        C^1_\mathrm{obs} := \delta(C^0_\mathrm{obs}) \subseteq C^1(G;\mathcal{F}).
    \end{gathered}
    \label{eq:observed_cochain_regions}
\end{equation}

Since $\Phi$ depends on $y \in C^1(G;\mathcal{F})$ alone (Assumption~\ref{assume:Phi}), the residual~\eqref{eq:edge_residual_def} factors through $\delta x$. We may therefore define the \emph{observable force map}
\begin{equation}
    f_\mathrm{obs}(y) := \delta^* \Phi(y), \qquad y \in C^1_\mathrm{obs}.
    \label{eq:observable_force}
\end{equation}
Trajectories reveal only $f_\mathrm{obs}$, not $\Phi$ itself. The SI problem reduces to inverting the map $\Phi \mapsto \delta^* \Phi$ on $C^1_\mathrm{obs}$.

\begin{lemma}[Identifiability obstruction]
    \label{lem:obstruction}
    Two edge force laws $\Phi, \widetilde{\Phi}: C^1(G;\mathcal{F}) \to C^1(G;\mathcal{F})$ satisfy $\delta^* \Phi(y) = \delta^* \widetilde{\Phi}(y)$ for all $y \in C^1_\mathrm{obs}$ if and only if $(\Phi - \widetilde{\Phi})(y) \in \ker \delta^*$ for all $y \in C^1_\mathrm{obs}$. By Lemma~\ref{lem:hodge}, $\ker \delta^* \cong H^1(G;\mathcal{F})$, so the dimension of the space of undetectable perturbations equals $\dim H^1(G;\mathcal{F})$.\footnote{For proofs and additional experimental details, see
    the Appendix.%
    % our full technical report at \url{https://hansriess.com/paper/sheaf-si/}.%
    }
\end{lemma}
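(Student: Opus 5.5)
The equivalence is pointwise linear algebra, and I will prove it one point at a time. Fix $y \in C^1_\mathrm{obs}$. Since $\delta^*: C^1(G;\mathcal{F}) \to C^0(G;\mathcal{F})$ is linear,
\begin{equation}
\delta^*\Phi(y) - \delta^*\widetilde{\Phi}(y) = \delta^*\bigl(\Phi(y) - \widetilde{\Phi}(y)\bigr).
\end{equation}
The left-hand side vanishes if and only if $(\Phi - \widetilde{\Phi})(y) \in \ker\delta^*$, directly from the definition of the kernel. Quantifying over all $y \in C^1_\mathrm{obs}$ gives both directions of the ``if and only if'' at once. I will stress that no structure of $\Phi$ is used here: neither Assumption~\ref{assume:Phi} nor continuity enters. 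The argument needs only that $\delta^*$ is a linear map, which holds because it is the Hilbert-space adjoint $M_1^{-1}B^\top M_2$.

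For the dimension count, I will invoke Lemma~\ref{lem:hodge}. It gives the orthogonal decomposition $C^1(G;\mathcal{F}) = \mathrm{im}\,\delta \oplus \ker\delta^*$, and the quotient map $C^1 \to C^1/\mathrm{im}\,\delta = H^1(G;\mathcal{F})$ restricts to an isomorphism on $\ker\delta^*$. Hence $\dim\ker\delta^* = \dim H^1(G;\mathcal{F})$. To make this self-contained, I can also recall the one-line reason behind it: $(\mathrm{im}\,\delta)^\perp = \ker\delta^*$, since $\langle \delta x, y\rangle_{C^1} = \langle x, \delta^* y\rangle_{C^0}$ for all $x$.

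The only delicate point is the meaning of ``the space of undetectable perturbations.'' I will interpret it pointwise: at each observed $y$, the set of admissible values of $\Phi(y) - \widetilde{\Phi}(y)$ is exactly the linear subspace $\ker\delta^*$, so this fibre has dimension $\dim H^1(G;\mathcal{F})$. As a function space, the undetectable perturbations are the maps $C^1_\mathrm{obs} \to \ker\delta^*$, which in general form an infinite-dimensional space. I will state this interpretation explicitly and keep the dimension claim at the level of the fibre $\ker\delta^*$.

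One further point needs a remark. If $\widetilde{\Phi}$ must also be a gradient of a separable potential, the perturbation must additionally satisfy those constraints. Since the lemma imposes no such requirement, the pointwise characterization is exact as stated.
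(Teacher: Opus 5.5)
Your proposal is correct and matches the paper's proof: linearity of $\delta^*$ gives the pointwise equivalence at each fixed $y \in C^1_\mathrm{obs}$, and the dimension claim is read, as in the paper, as a statement about the fibre $\ker\delta^* \cong H^1(G;\mathcal{F})$ of admissible perturbation values at each observed $y$. Your extra remarks are accurate clarifications of that same argument, not a different route: the quotient map restricts to an isomorphism on $\ker\delta^*$, and the function space of perturbations is in general infinite-dimensional.
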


The following theorem characterizes when recovery is possible without parameterization.

\begin{theorem}[Unrestricted Recovery Boundary]
    \label{thm:unrestricted_recovery}
    Suppose the sheaf $\mathcal{F}$ is fixed and known to the observer, and $\Psi$ satisfies Assumption~\ref{assume:Psi} with each $W_v$ given as prior knowledge. Let
    \begin{equation}
        \mathcal{M}_U := \bigl\{\nabla U : U(y) = \textstyle\sum_{e \in E} U_e(y_e),\ U_e \in C^1(\mathcal{F}(e))\bigr\}
    \end{equation}
    be the set of conservative, edge-separable force maps. Then the following are equivalent:
    \begin{enumerate}
        \renewcommand{\labelenumi}{\textnormal{(\roman{enumi})}}
        \item For every $\Phi \in \mathcal{M}_U$, no other $\widetilde{\Phi} \in \mathcal{M}_U$ satisfies $\delta^* \Phi(y) = \delta^* \widetilde{\Phi}(y)$ for all $y \in C^1_\mathrm{obs}$.
        \item $H^1(G;\mathcal{F}) = 0$.
    \end{enumerate}
\end{theorem}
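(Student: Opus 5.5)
We prove the two directions separately, using Lemma~\ref{lem:obstruction} to reduce (i) to a statement about perturbations valued in $\ker\delta^*$. By that lemma, $\Phi,\widetilde{\Phi}\in\mathcal{M}_U$ are indistinguishable precisely when $D:=\Phi-\widetilde{\Phi}$ satisfies $D(y)\in\ker\delta^*$ for all $y\in C^1_\mathrm{obs}$. Since $\mathcal{M}_U$ is a vector space (sums of edge-separable $C^1$ potentials are again edge-separable and $C^1$), $D=\nabla(U-\widetilde U)\in\mathcal{M}_U$. So (i) holds if and only if the only $D\in\mathcal{M}_U$ with $D(C^1_\mathrm{obs})\subseteq\ker\delta^*$ is $D=0$.

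\textbf{(ii)$\Rightarrow$(i).} If $H^1(G;\F)=0$, then Lemma~\ref{lem:hodge} gives $\ker\delta^*=0$. Hence $D(y)=0$ on $C^1_\mathrm{obs}$, i.e.\ $\Phi=\widetilde\Phi$ on the observed region. There is a subtlety here: off $C^1_\mathrm{obs}$ the force laws are unconstrained by any data. We therefore read ``no other $\widetilde\Phi$'' as ``no other $\widetilde\Phi$ as a map on $C^1_\mathrm{obs}$'', which is the only sense consistent with the statement being data-dependent. We will state this interpretation explicitly; this is the main point to be careful about.

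\textbf{(i)$\Rightarrow$(ii), by contrapositive.} Suppose $H^1\neq 0$ and pick a nonzero $z\in\ker\delta^*$ via Lemma~\ref{lem:hodge}. Take the linear potential $U_e(y_e)=\langle z_e,y_e\rangle_e$. It is edge-separable and smooth, and its gradient is the constant field $D(y)\equiv z$. Then $\widetilde\Phi:=\Phi+D\in\mathcal{M}_U$ and $\delta^*\widetilde\Phi=\delta^*\Phi$ everywhere, in particular on $C^1_\mathrm{obs}$. Moreover $\widetilde\Phi\neq\Phi$ at every point, including every point of a nonempty $C^1_\mathrm{obs}$, so the difference is visible even under the restricted interpretation. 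This violates (i) for any $\Phi$, e.g.\ $\Phi=0$.

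The constant-shift construction is the key step: it avoids any need for edge-separability to interact with the global subspace $\ker\delta^*$, since the gradient of a separable linear potential can be an arbitrary constant cochain. That this shift is exactly the shifted-quadratic ambiguity of Example~\ref{ex:edge_potentials} ($\hat y\mapsto\hat y+z$ with $z\in\ker\delta^*$) is worth remarking. The only genuine obstacle is fixing the meaning of uniqueness in (ii)$\Rightarrow$(i). Alternatively, one may argue that if $C^1_\mathrm{obs}$ is assumed to contain an open set, then analytic or linear-class uniqueness would be needed. We will adopt the on-$C^1_\mathrm{obs}$ interpretation, for which the argument above is complete.
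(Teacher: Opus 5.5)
Your proposal is correct and matches the paper's proof essentially step for step: (ii)$\Rightarrow$(i) via $\ker\delta^*=0$ and Lemma~\ref{lem:obstruction}, and (i)$\Rightarrow$(ii) by contrapositive, adding the linear edge potentials $\ell_e(\xi)=\langle z_e,\xi\rangle_e$ to obtain $\widetilde\Phi=\Phi+z$ for some nonzero $z\in\ker\delta^*$. The paper also tacitly reads (i) as agreement on $C^1_\mathrm{obs}$, since it concludes (i) from $\Phi=\widetilde\Phi$ on $C^1_\mathrm{obs}$; your explicit statement of this interpretation, and your check that $\widetilde\Phi\neq\Phi$ on $C^1_\mathrm{obs}$ itself, is a clarification of the same argument rather than a departure from it.
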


When the obstruction vanishes, the recovered force law also determines each edge potential uniquely.

\begin{corollary}
    When the conditions of Theorem~\ref{thm:unrestricted_recovery} hold, the recovered force $\Phi = \nabla U$ uniquely determines each $U_e$ on every path-connected component of $\pi_e(C^1_\mathrm{obs})$, where $\pi_e$ is the projection onto $\mathcal{F}(e)$, up to an additive constant.
    % \hr{Maybe write this as $\pi_e(C^1_{\mathrm{obs}})$ \na{done, see if that is what you meant.}}
\end{corollary}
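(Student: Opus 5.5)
The plan is to reduce the corollary to two facts: Theorem~\ref{thm:unrestricted_recovery} already pins down $\Phi$ on $C^1_\mathrm{obs}$, and a $C^1$ function with known gradient is fixed up to a constant on any path-connected set.

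\textbf{Step 1: recover $\Phi$ componentwise.} Under the hypotheses of Theorem~\ref{thm:unrestricted_recovery}, $H^1(G;\mathcal{F})=0$. By Lemma~\ref{lem:hodge}, $\ker\delta^*=0$, so $\delta^*$ is injective. Hence the observable force map $f_\mathrm{obs}(y)=\delta^*\Phi(y)$ determines $\Phi(y)$ for every $y\in C^1_\mathrm{obs}$, via $\Phi(y)=(\delta^*)^{+}f_\mathrm{obs}(y)$. This is the direction of Lemma~\ref{lem:obstruction} with a trivial obstruction space.

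Edge-separability gives $\Phi(y)_e=\nabla U_e(y_e)$ for every $y$ and $e$. Therefore $\nabla U_e$ is determined at every point $z\in\pi_e(C^1_\mathrm{obs})$. One point needs care here: a given $z$ may arise as $y_e$ for several $y\in C^1_\mathrm{obs}$. All of them yield the same value $\nabla U_e(z)$, because the $e$-component of $\Phi(y)$ depends only on $y_e$. So the recovered gradient on $\pi_e(C^1_\mathrm{obs})$ is well defined.

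\textbf{Step 2: uniqueness up to a constant.} Suppose $U_e$ and $\widetilde U_e$ are two $C^1$ potentials with the same gradient on $S_e:=\pi_e(C^1_\mathrm{obs})$. Set $D=U_e-\widetilde U_e$; then $\nabla D=0$ on $S_e$. Take any two points $z_0,z_1$ in one path-connected component of $S_e$. I would join them by a path $\gamma$ in $S_e$ and conclude from the chain rule, $\frac{d}{ds}D(\gamma(s))=\langle\nabla D(\gamma(s)),\gamma'(s)\rangle_e=0$, that $D(z_0)=D(z_1)$. Thus $D$ is constant on each component.

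\textbf{Main obstacle: regularity of the connecting path.} The chain-rule argument needs $\gamma$ to be differentiable, but path-connectedness only supplies a continuous path. I would handle this in one of two ways.

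First option: use the actual structure of $S_e$. It is a finite union of images $\pi_e\delta x^{(i)}([0,T_i])$, and each is a $C^1$ curve because solutions of~\eqref{eq:dynamics} are $C^1$. Components are then formed from such curves meeting one another, so any two points in a component can be joined by a piecewise-$C^1$ path made of trajectory segments.

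Second option, for a general continuous path: show that $D\circ\gamma$ is locally constant. At each $s$, differentiability of $D$ at $\gamma(s)$ with $\nabla D(\gamma(s))=0$ gives $|D(\gamma(s'))-D(\gamma(s))|=o(\|\gamma(s')-\gamma(s)\|)$. This alone does not force constancy for wild paths, so I would fall back on the trajectory-curve structure from the first option.

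This yields uniqueness of each $U_e$ up to an additive constant on every path-connected component of $\pi_e(C^1_\mathrm{obs})$.
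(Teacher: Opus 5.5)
Your proposal is correct and follows the paper's proof: injectivity of $\delta^*$ fixes $\Phi$ on $C^1_\mathrm{obs}$, and edge-separability makes the recovered gradient on $\pi_e(C^1_\mathrm{obs})$ well defined; the difference of two candidate potentials, whose gradient vanishes there, is then integrated along piecewise-$C^1$ paths. Your first option for the regularity issue supplies a step the paper skips: the paper simply works with a ``piecewise-$C^1$ path-connected component'' $Q$, whereas your observation that $\pi_e(C^1_\mathrm{obs})$ is a finite union of compact $C^1$ trajectory images is what guarantees piecewise-$C^1$ connecting paths exist (and you are right not to rely on arbitrary continuous paths, since Whitney's classical example gives a $C^1$ function that is nonconstant on an arc of critical points).
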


\section{Parametric Potential Reconstruction}
\label{sec:parametric}
%%%%%%%%%%%%%%%%%%%%%%%%%%%%%%%%%%%%%%%%%%%%%%%%%%%%%%%%%%%%

When $H^1(G;\mathcal{F}) \neq 0$, Lemma~\ref{lem:obstruction} shows that $\ker\delta^* \neq 0$, so for any $\Phi \in \mathcal{M}_U$ the perturbation $\Phi + z$ with $z \in \ker\delta^*$, $z \neq 0$, is indistinguishable from $\Phi$ via trajectory data. Recovery from $\mathcal{M}_U$ is impossible; identifiability can be restored by restricting to a finite-dimensional parameterized class.

For ease of exposition, we assume all edge stalks are isomorphic to a common space $W \cong \mathbb{R}^d$, which holds when all communication channels employ the same comparison space---that is, when all edge stalks $\mathcal{F}(e)$ are copies of the same space. The edge-heterogeneous case follows by applying the construction per isomorphism class.

Fix a finite collection of smooth basis potentials $\psi_1, \ldots, \psi_p$. With unknown coefficients $\theta = (\theta_m) \in \mathbb{R}^p$, define
\begin{equation}
    U_\theta(y) := \textstyle\sum_{e \in E} \textstyle\sum_{m=1}^{p} \theta_m\, \psi_m(y_e).
    \label{eq:parametrization}
\end{equation}
The $\theta_m$ are shared across all edges because we assume homogeneous stalks and a common interaction law: every agent pair uses the same potential family, differing only via the restriction maps encoded in $\delta$. The corresponding force $\Phi_\theta(y) = (\sum_m \theta_m \nabla \psi_m(y_e))_{e \in E}$ is linear in $\theta$.

\begin{theorem}
    \label{thm:partitioned_identifiability}
    Suppose we observe $N$ distinct edge states $y^{(1)}, \ldots, y^{(N)} \in C^1_\mathrm{obs}$ with corresponding observed node signals $f^{(i)}_\mathrm{obs} := f_\mathrm{obs}(y^{(i)})$.
    \begin{enumerate}
        \renewcommand{\labelenumi}{\textnormal{(\alph{enumi})}}
        \item \emph{Linear parameterization:} Define the design matrix $A: \mathbb{R}^p \to C^0(G;\mathcal{F})^N$ by
              \begin{equation}
                  (A\theta)_i := \delta^* \nabla U_\theta(y^{(i)}), \quad i = 1, \ldots, N.
                  \label{eq:design_matrix}
              \end{equation}
              The true parameter $\theta_*$ is uniquely recoverable from the data $\{f^{(i)}_\mathrm{obs}\}$ if and only if $\ker A = \{0\}$, equivalently, the Gram matrix $\Gamma := A^* A$ is positive definite.

        \item \emph{Nonlinear parameterization:} For a nonlinear parameterization on open $\Theta \subseteq \mathbb{R}^p$, let $\mathcal{O}_N(\theta) := \bigl(\delta^* \nabla U_\theta(y^{(i)})\bigr)_{i=1}^N \in C^0(G;\mathcal{F})^N$.
              The parameter $\theta_*$ is locally identifiable if the Jacobian $D\mathcal{O}_N(\theta_*)$ is injective, equivalently, the information matrix
              \begin{equation}
                  \mathcal{I}_N := \textstyle\sum_{i=1}^N S_i^* S_i, \quad S_i := D_\theta[\delta^* \nabla U_\theta(y^{(i)})]_{\theta = \theta_*},
                  \label{eq:information_matrix}
              \end{equation}
              is positive definite.
    \end{enumerate}
\end{theorem}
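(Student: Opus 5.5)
Part (a) is a linear-algebra argument; part (b) reduces to it after linearizing at $\theta_*$. The starting observation is that, by \eqref{eq:parametrization}, $\nabla U_\theta(y) = \bigl(\sum_m \theta_m \nabla\psi_m(y_e)\bigr)_{e\in E}$ depends linearly on $\theta$. Since $\delta^*$ is linear, the map $\theta \mapsto (\delta^*\nabla U_\theta(y^{(i)}))_{i=1}^N$ is a linear map $A:\mathbb{R}^p \to C^0(G;\mathcal{F})^N$. I equip the codomain with the product inner product $\langle (u_i),(u'_i)\rangle := \sum_i \langle u_i,u'_i\rangle_{C^0}$, so that the adjoint $A^*$, and hence $\Gamma = A^*A$, are well defined. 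In coordinates this gives $\Gamma = \sum_i A_i^\top M_1 A_i$, where $A_i$ is the $i$-th block of $A$.

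For (a), the data are exactly $f^{(i)}_{\mathrm{obs}} = \delta^*\nabla U_{\theta_*}(y^{(i)}) = (A\theta_*)_i$, so the data vector is $A\theta_*$. If $\ker A = \{0\}$, then any $\theta$ consistent with the data satisfies $A(\theta-\theta_*) = 0$, which forces $\theta = \theta_*$. Explicitly, $\theta_* = \Gamma^{-1}A^*(f^{(i)}_{\mathrm{obs}})_i$, i.e.\ the least-squares estimator. Conversely, if $0 \neq z \in \ker A$, then $\theta_*+z$ produces identical data, so $\theta_*$ is not recoverable. The equivalence with $\Gamma \succ 0$ is the standard identity $\theta^\top\Gamma\theta = \|A\theta\|^2$. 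This shows $\Gamma \succeq 0$, and that $\theta^\top\Gamma\theta = 0$ if and only if $A\theta = 0$. Thus $\Gamma \succ 0$ if and only if $\ker A = \{0\}$.

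For (b), I assume enough smoothness in $\theta$ that $\mathcal{O}_N$ is $C^1$ on the open set $\Theta$; the main care is needed here. The Jacobian $D\mathcal{O}_N(\theta_*)$ has block rows $S_i$. Repeating the computation from (a) gives
\[
v^\top \mathcal{I}_N v = \textstyle\sum_i \|S_i v\|^2_{C^0} = \|D\mathcal{O}_N(\theta_*)v\|^2 ,
\]
so $\mathcal{I}_N \succ 0$ if and only if the Jacobian is injective. Local identifiability then follows from a standard injectivity argument. Let $c>0$ be the smallest singular value of $D\mathcal{O}_N(\theta_*)$, so that $\|D\mathcal{O}_N(\theta_*)v\| \ge c\|v\|$. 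By continuity of the derivative, choose a ball $B\subseteq\Theta$ around $\theta_*$ on which $\|D\mathcal{O}_N(\theta) - D\mathcal{O}_N(\theta_*)\| < c/2$. For $\theta \in B$, the mean value inequality applied to $\mathcal{O}_N - D\mathcal{O}_N(\theta_*)(\cdot)$ gives
\[
\|\mathcal{O}_N(\theta) - \mathcal{O}_N(\theta_*)\| \ge c\|\theta-\theta_*\| - (c/2)\|\theta-\theta_*\|,
\]
and the right-hand side is positive unless $\theta = \theta_*$. Hence no other parameter in $B$ reproduces the observations.

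The main obstacle is only bookkeeping: fixing the inner product on $C^0(G;\mathcal{F})^N$ (the weighted $M_1$ Gram form) so that the adjoints $A^*$ and $S_i^*$ match the paper's $\delta^* = M_1^{-1}B^\top M_2$ convention. With that fixed, the rest is routine. Note that the $H^1$ obstruction of Lemma~\ref{lem:obstruction} is absorbed into $\ker A$: directions $\theta$ with $\nabla U_\theta(y^{(i)}) \in \ker\delta^*$ for all $i$ are exactly the non-identifiable ones.
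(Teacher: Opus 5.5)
Your proof is correct. Part (a) is the paper's argument: the noiseless data equal $A\theta_*$, so uniqueness holds iff $\ker A=\{0\}$, and $\langle v,\Gamma v\rangle=\|Av\|^2$ gives the equivalence with $\Gamma\succ 0$.

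For local identifiability in (b) you take a different route. The paper chooses a linear left inverse $P$ of $J=D\mathcal{O}_N(\theta_*)$, so that $PJ=I$. It then applies the inverse function theorem to the square map $F=P\circ\mathcal{O}_N$ and transfers local injectivity of $F$ back to $\mathcal{O}_N$. You instead prove injectivity directly: on a ball where $\|D\mathcal{O}_N(\theta)-J\|<c/2$, you bound $\mathcal{O}_N-J(\cdot)$ with the mean value inequality.

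The paper's version is shorter because it cites a standard theorem as a black box. Yours is self-contained and gives a quantitative stability estimate, $\|\mathcal{O}_N(\theta)-\mathcal{O}_N(\theta_*)\|\ge \tfrac{c}{2}\|\theta-\theta_*\|$ with $c^2=\lambda_{\min}(\mathcal{I}_N)$. This makes explicit the link between the conditioning of the information matrix and the robustness of recovery that the experiments show.

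The paper concludes something slightly stronger in form: any two parameters near $\theta_*$ with equal observations coincide. Your argument gives this as well. Because the ball is convex, the same estimate applied to an arbitrary pair $\theta_1,\theta_2\in B$ yields $\|\mathcal{O}_N(\theta_1)-\mathcal{O}_N(\theta_2)\|\ge\tfrac{c}{2}\|\theta_1-\theta_2\|$.

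Both proofs require $\mathcal{O}_N$ to be $C^1$; you state this assumption explicitly, while the paper uses it without comment.
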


\begin{remark}
    The edge states $y^{(i)} = \delta x^{(i)}$ entering Theorem~\ref{thm:partitioned_identifiability} are computed from observed node states via the coboundary map $\delta$; no direct measurement of edge quantities is required.
\end{remark}

\begin{remark}
    Theorem~\ref{thm:partitioned_identifiability} implies that identifiability depends jointly on: (i) the choice of basis functions $\{\psi_m\}$, and (ii) the diversity of edge states $\{y^{(i)}\}$ visited by the data---richer coverage of $C^1_\mathrm{obs}$ leads to a better-conditioned $\mathcal{I}_N$, while data confined to a low-dimensional region may leave $\mathcal{I}_N$ singular even when the basis is well chosen. For a fixed basis, one can optimize $\mathcal{I}_N$ via experimental design: trajectories should visit diverse regions of $C^1_\mathrm{obs}$, and basis functions should be chosen so that their images under $\delta^*$ span a large subspace of $C^0(G;\mathcal{F})$.
\end{remark}

\begin{figure*}[!t]
    \centering
    \includegraphics[width=\textwidth,keepaspectratio]{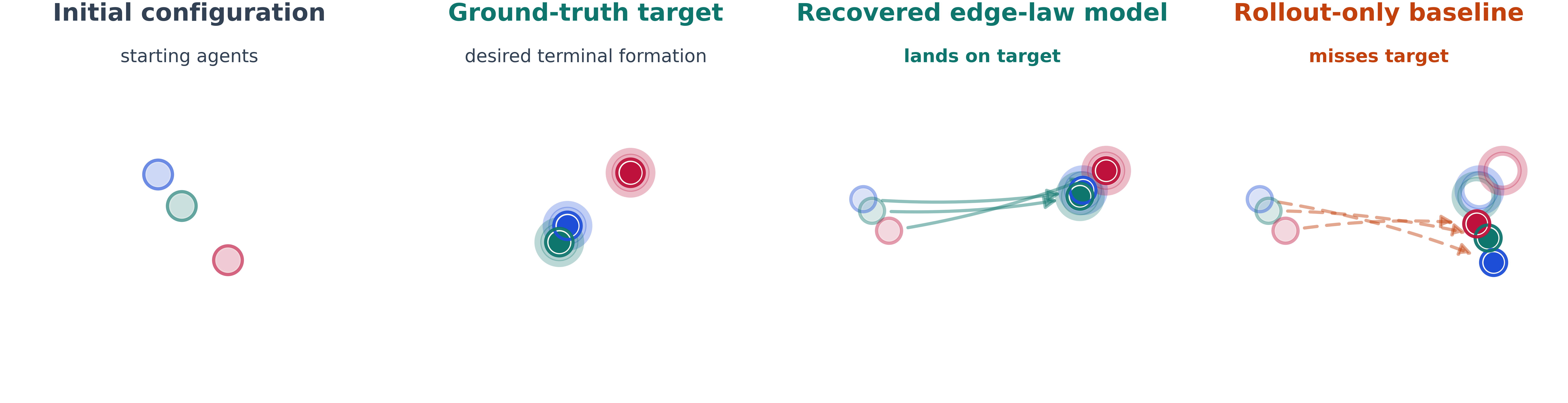}
    \caption{Experiment 1 on Sheaf B. The recovered edge law reaches the target formation; the rollout-only baseline does not.}
    \label{fig:exp1_formation_transfer}
\end{figure*}

\subsection{Estimation via Least Squares}
\label{subsec:estimation}

% \hr{Emperical risk minimization (ERM) formulation of the problem can get incorporated here}
Given the parameterization~\eqref{eq:parametrization}, we observe discrete trajectory samples $\{(x_k, \dot{x}_k)\}_{k=1}^N$ and compute residuals $r_k := -\dot{x}_k - \Psi(x_k)$, which satisfy $r_k = \delta^* \nabla U_\theta(\delta x_k)$ by~\eqref{eq:edge_residual_def}. We estimate $\theta$ via
\begin{equation}
    \hat\theta \in \arg\textstyle\min_{\theta}~(1/N) ~\textstyle\sum_{k=1}^N \bigl\|r_k - \delta^* \nabla U_\theta(\delta x_k)\bigr\|^2_{C^0} + \lambda \mathcal{R}(\theta),
    \label{eq:estimation_objective}
\end{equation}
where $\mathcal{R}$ is a regularizer and $\lambda \geq 0$. Parameterizing $U_\theta$ directly enforces conservativity by construction. When derivatives are noisy, one may instead minimize the integrated residual
\begin{equation}
    x(t_{k+1}) - x(t_k) \approx -\textstyle\int_{t_k}^{t_{k+1}} \bigl[\delta^* \nabla U_\theta(\delta x(s)) + \Psi(x(s))\bigr] \dd{s},
    \label{eq:integral_fitting}
\end{equation}
which avoids explicit differentiation and improves numerical stability.

\section{Experiments}
\label{sec:experiments}

Each experiment isolates a different obstruction to edge-law recovery---cohomological ambiguity, insufficient excitation, and finite-basis rank deficiency---and tests whether accurate node rollout implies accurate recovery of the local law. Throughout, $\Psi=0$, so $r(x)=-\dot{x}$ is directly observable.

\subsection{Experiment 1: Cohomological Ambiguity in Formation Transfer}

We study sheaf dynamics on a directed cycle under a shifted quadratic edge potential (Example~\ref{ex:edge_potentials}), whose equilibrium encodes a target agent formation. We construct a competing edge law that differs from the true law by an element of $\ker\delta^*$---a perturbation invisible to node-level observations---and compare its effect on two sheaves: Sheaf~A, for which $H^1(G;\mathcal{F})\neq 0$, and Sheaf~B, for which $H^1(G;\mathcal{F})=0$. On Sheaf~A, the perturbed law produces identical node trajectories to the true law, so a rollout criterion accepts the wrong local rule and the system fails to reach the target formation; on Sheaf~B, the same perturbation alters the rollout and is detectable. Figure~\ref{fig:exp_nonparametric_threshold_summary} (left) illustrates both cases. Consequently, when $H^1(G;\mathcal{F})\neq 0$, node-trajectory agreement does not certify edge-law recovery: distinct local laws can produce observationally equivalent dynamics.

\begin{figure}[t]
\vspace{-1em}
\centering
\includegraphics[width=\columnwidth,keepaspectratio]{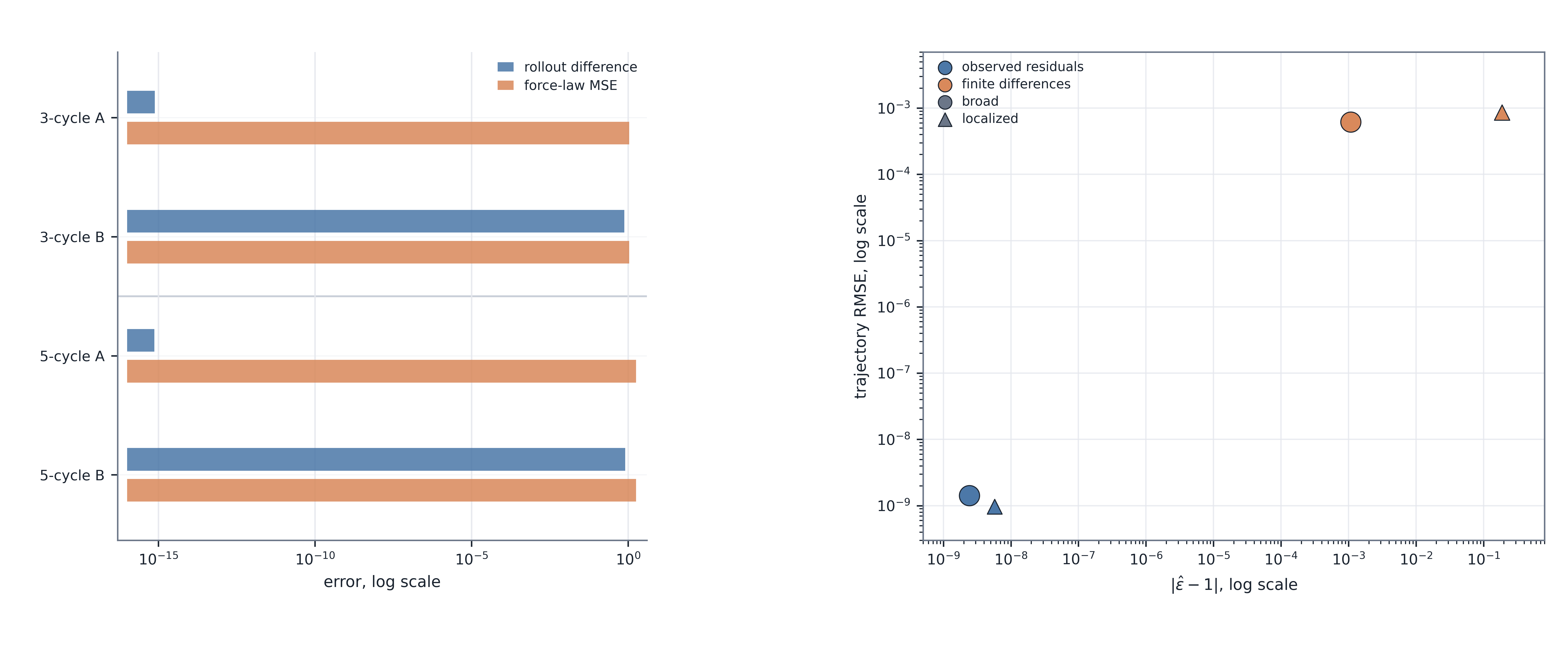}
\vspace{-2em}
\caption{Nonparametric ambiguity and threshold recovery.}
\label{fig:exp_nonparametric_threshold_summary}
\end{figure}

\subsection{Experiment 2: Parametric Recovery under Bounded Confidence}

On a sheaf with $H^1(G;\mathcal{F})=0$, we recover the scalar confidence threshold $\epsilon$ of the bounded-confidence edge potential (Example~\ref{ex:edge_potentials}), specialized to the smooth polynomial approximation
\[
    U_e(y_e;\epsilon)=\psi_\epsilon(\|y_e\|),
    \quad
\psi_\epsilon(r)=
\begin{cases}
\frac12 r^2-\frac{1}{2\epsilon^2}r^4+\frac{1}{6\epsilon^4}r^6,
& r\le \epsilon,\\[0.6em]
\frac{\epsilon^2}{6},
& r>\epsilon,
\end{cases}
\]
under which each edge contributes a restoring force only when disagreement falls below $\epsilon$. We vary data coverage (broad initial conditions spanning below, near, and above the threshold vs.\ localized conditions confined near the cutoff) and residual quality (directly observed vs.\ estimated by finite differences under node noise). Broad coverage recovers the threshold accurately under both residual regimes, while localized coverage degrades substantially when residuals are estimated from noisy data; in all cases, rollout error alone does not signal the failure. Figure~\ref{fig:exp_nonparametric_threshold_summary} (right) illustrates these outcomes. The key lesson is that sufficient coverage of $C^1_\mathrm{obs}$---in particular, excitation of edge states near the threshold---is a necessary condition for recovery independent of any structural obstruction.

\subsection{Experiment 3: Finite-Basis Identifiability}

We return to a sheaf with $H^1(G;\mathcal{F})\neq 0$, where nonparametric recovery is impossible, and restrict the edge law to the finite-dimensional monomial basis class of Section~\ref{sec:parametric}. We test three conditions against the criterion of Theorem~\ref{thm:partitioned_identifiability}: the correct basis with broad data coverage; the correct basis augmented with a harmonic edge mode (a constant force that vanishes under $\delta^*$ and is therefore unidentifiable from trajectories); and the correct basis with limited initial-condition coverage. The correct basis with broad coverage yields a positive-definite information matrix $\Gamma$ and accurate parameter recovery. The augmented basis renders $\Gamma$ singular---the coefficient of the harmonic mode is not identifiable despite small rollout error---and limited coverage makes $\Gamma$ nearly singular with similarly degraded recovery, again without a correspondingly large rollout error. Figure~\ref{fig:exp_basis_force_summary} confirms this picture. Identifiability in the parametric setting is thus governed by the rank of $\Gamma$, jointly determined by basis choice and data coverage, not by rollout accuracy.

\begin{figure}[t]
\vspace{-2em}
\centering
\includegraphics[width=\columnwidth,keepaspectratio]{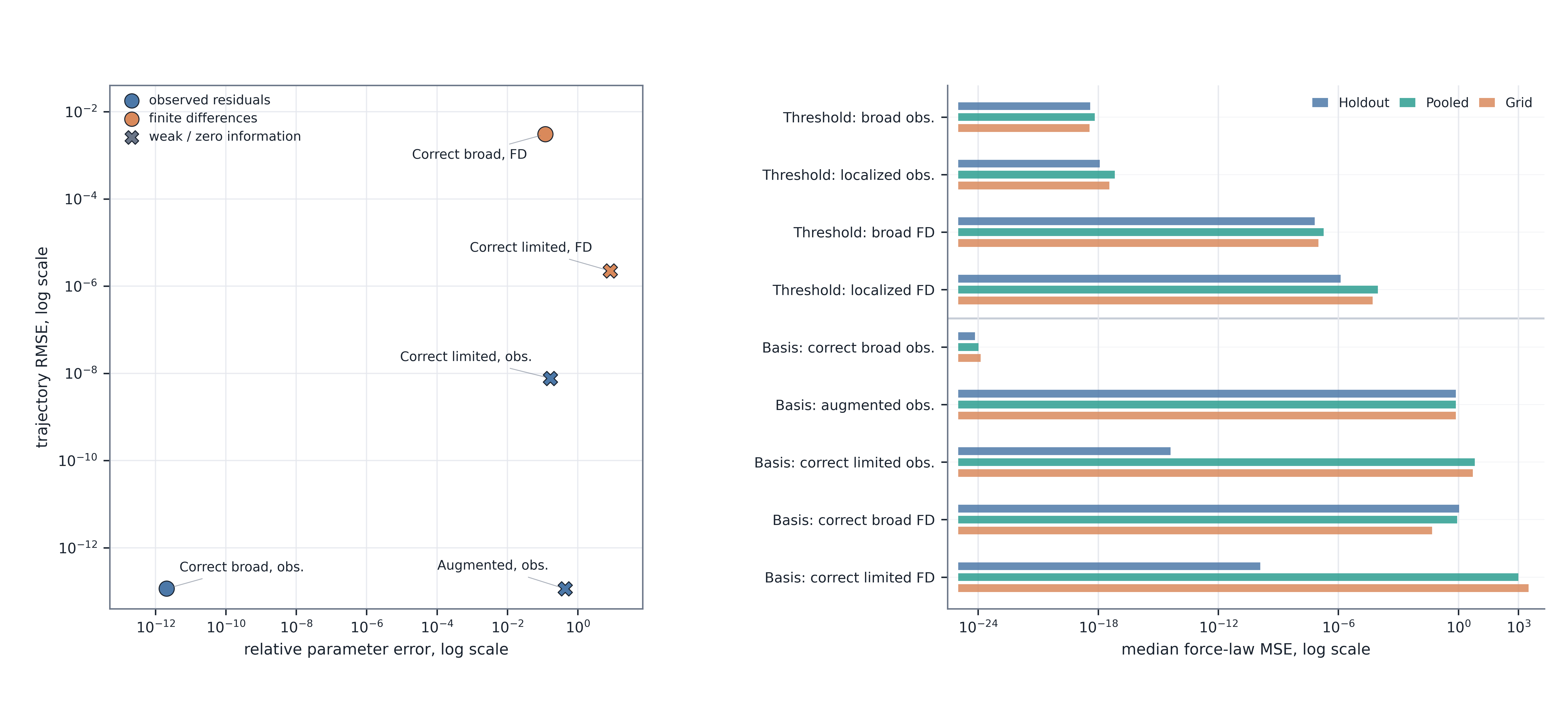}
\caption{Finite-basis recovery and force-law error.}
\label{fig:exp_basis_force_summary}
\vspace{-1em}
\end{figure}

\section{Conclusion}
\label{sec:conclusion}
We studied when local interaction laws can be recovered from trajectory data in sheaf-based multi-agent systems. Our theory shows that recovery is determined by the sheaf cohomology of the system: when $H^1(G;\mathcal{F})=0$, the local law is uniquely recoverable from trajectory data; when $H^1(G;\mathcal{F})\neq 0$, recovery requires both a restricted parametric class and sufficiently diverse data. In either regime, accurate node rollout does not certify recovery of the underlying local law---prediction and system identification are distinct objectives, and a learned model is interpretable only when the identifiability conditions are met. A natural next step is to develop estimation and experiment-design methods that remain reliable under noise, partial observation, and more general heterogeneous dynamics. A further limitation of the present work is that the node potential $\Psi$ is assumed known and subtracted from the data as preprocessing; extending the identifiability theory to the joint recovery of both $\Phi$ and $\Psi$ from trajectories is an important open problem.

%%%%%%%%% BIBLIOGRAPHY %%%%%%%%%%%%%%%%%%%%%%%%%%%%%%%%%%%%%%
\newpage
\bibliographystyle{IEEEtran}
\bibliography{references}
%%%%%%%%%%%%%%%%%%%%%%%%%%%%%%%%%%%%%%%%%%%%%%%%%%%%%%%%%%%%%

%%%%%%%%% APPENDIX %%%%%%%%%%%%%%%%%%%%%%%%%%%%%%%%%%%%%%%%%%
\appendix
\subsection{Proofs}
%%%%%%%%%%%%%%%%%%%%%%%%%%%%%%%%%%%%%%%%%%%%%%%%%%%%%%%%%%%%%

This appendix proves Lemma~\ref{lem:obstruction}, Theorem~\ref{thm:unrestricted_recovery},
its corollary, and Theorem~\ref{thm:partitioned_identifiability}. Throughout, gradients
and adjoints are taken with respect to the inner products on the stalk and cochain spaces.

\begin{proof}[Proof of Lemma~\ref{lem:obstruction}]
Fix $y\in C^1_\mathrm{obs}$. Then
\begin{equation}
    \begin{aligned}
    &\delta^*\Phi(y)=\delta^*\widetilde{\Phi}(y)\\
    &\iff \delta^*(\Phi-\widetilde{\Phi})(y)=0\\
    &\iff (\Phi-\widetilde{\Phi})(y)\in\ker\delta^*.
    \end{aligned}
\end{equation}
Since this holds for every $y\in C^1_\mathrm{obs}$, the first claim follows.

By Lemma~\ref{lem:hodge}, $\ker\delta^*\cong H^1(G;\F)$ as inner-product spaces.
Hence, for each fixed observed edge state $y$, the vector space of undetectable
perturbation values at $y$ is exactly $\ker\delta^*$, and therefore has dimension
$\dim H^1(G;\F)$.
\end{proof}

\begin{proof}[Proof of Theorem~\ref{thm:unrestricted_recovery}]
We prove \textnormal{(i)} \(\Leftrightarrow\) \textnormal{(ii)}.

\emph{\textnormal{(ii)} $\Rightarrow$ \textnormal{(i)}.}
Assume $H^1(G;\F)=0$. By Lemma~\ref{lem:hodge}, this is equivalent to
$\ker\delta^*=\{0\}$. If $\Phi,\widetilde{\Phi}\in\mathcal{M}_U$ satisfy
\[
\delta^*\Phi(y)=\delta^*\widetilde{\Phi}(y)
\qquad\text{for all }y\in C^1_\mathrm{obs},
\]
then Lemma~\ref{lem:obstruction} gives
\[
(\Phi-\widetilde{\Phi})(y)\in\ker\delta^*=\{0\}
\qquad\text{for all }y\in C^1_\mathrm{obs}.
\]
Hence $\Phi(y)=\widetilde{\Phi}(y)$ for all $y\in C^1_\mathrm{obs}$, proving
\textnormal{(i)}.

\emph{\textnormal{(i)} $\Rightarrow$ \textnormal{(ii)}.}
We prove the contrapositive. Suppose $H^1(G;\F)\neq 0$. By Lemma~\ref{lem:hodge},
choose $0\neq z=(z_e)_{e\in E}\in\ker\delta^*$. Fix any $\Phi=\nabla U\in\mathcal{M}_U$,
where
\[
U(y)=\sum_{e\in E}U_e(y_e),
\qquad
U_e\in C^1(\F(e)).
\]
For each edge $e$, define
\[
\begin{aligned}
\ell_e(\xi)&:=\langle z_e,\xi\rangle_e,\\
\widetilde{U}_e&:=U_e+\ell_e,\\
\widetilde{U}(y)&:=\sum_{e\in E}\widetilde{U}_e(y_e).
\end{aligned}
\]
Then each $\widetilde{U}_e$ is $C^1$, so $\widetilde{\Phi}:=\nabla\widetilde{U}$
belongs to $\mathcal{M}_U$. Since $\nabla \ell_e=z_e$, we have
\begin{equation}
    \widetilde{\Phi}(y)
    =
    \nabla\widetilde{U}(y)
    =
    \nabla U(y)+z
    =
    \Phi(y)+z
\end{equation}
for all $y\in C^1(G;\F)$.
Thus $\widetilde{\Phi}\neq\Phi$. But for every $y\in C^1_\mathrm{obs}$,
\begin{equation}
    \delta^*\widetilde{\Phi}(y)
    =\delta^*(\Phi(y)+z)
    =\delta^*\Phi(y)+\delta^*z
    =\delta^*\Phi(y),
\end{equation}
because $z\in\ker\delta^*$. Hence \textnormal{(i)} fails. This proves the
contrapositive.
\end{proof}

\begin{proof}[Proof of the Corollary]
Under Theorem~\ref{thm:unrestricted_recovery}, the force law $\Phi=\nabla U$ is
uniquely determined on $C^1_\mathrm{obs}$. Fix $e\in E$ and let
\[
S_e:=\pi_e(C^1_\mathrm{obs})\subseteq\F(e).
\]
Because $U$ is edge-separable,
\[
\Phi_e(y)=\nabla U_e(y_e)
\qquad\text{for all }y\in C^1(G;\F).
\]
Hence if $y,\bar y\in C^1_\mathrm{obs}$ satisfy $\pi_e(y)=\pi_e(\bar y)$, then
\[
\Phi_e(y)=\nabla U_e(y_e)=\nabla U_e(\bar y_e)=\Phi_e(\bar y).
\]
Therefore
\[
g_e:S_e\to\F(e),
\qquad
g_e(\xi):=\Phi_e(y),
\]
for any $y\in C^1_\mathrm{obs}$ with $\pi_e(y)=\xi$,
is well defined, and $g_e=\nabla U_e$ on $S_e$.

Let $Q$ be a piecewise-$C^1$ path-connected component of $S_e$, and fix $\xi_0\in Q$.
For any $\xi\in Q$, choose a piecewise-$C^1$ path $\gamma:[0,1]\to Q$ with
$\gamma(0)=\xi_0$ and $\gamma(1)=\xi$. Along each $C^1$ segment of $\gamma$, the chain rule gives
\[
\frac{\dd}{\dd t}U_e(\gamma(t))
=
\langle \nabla U_e(\gamma(t)),\dot\gamma(t)\rangle_e
=
\langle g_e(\gamma(t)),\dot\gamma(t)\rangle_e.
\]
Integrating along the segments yields
\[
U_e(\xi)-U_e(\xi_0)
=
\textstyle\int_0^1 \langle g_e(\gamma(t)),\dot\gamma(t)\rangle_e\,\dd t.
\]
Thus $g_e$ determines $U_e$ on $Q$ up to the additive constant $U_e(\xi_0)$.

If $\widehat U_e\in C^1(\F(e))$ also satisfies $\nabla \widehat U_e=g_e$ on $Q$, then
$h:=U_e-\widehat U_e$ satisfies $\nabla h=0$ on $Q$. For any piecewise-$C^1$ path
$\gamma$ in $Q$ from $\xi_0$ to $\xi$,
\[
h(\xi)-h(\xi_0)
=
\textstyle\int_0^1 \langle \nabla h(\gamma(t)),\dot\gamma(t)\rangle_e\,\dd t
=
0.
\]
Hence $h$ is constant on $Q$, so $\widehat U_e$ differs from $U_e$ by an additive
constant on $Q$.
\end{proof}

\begin{proof}[Proof of Theorem~\ref{thm:partitioned_identifiability}]
We treat \textnormal{(a)} and \textnormal{(b)} separately.

\emph{\textnormal{(a)} Linear parameterization.}
For each $i\in\{1,\dots,N\}$ and $m\in\{1,\dots,p\}$, set
\[
\zeta_m^{(i)}:=\bigl(\nabla\psi_m(y_e^{(i)})\bigr)_{e\in E}\in C^1(G;\F).
\]
Then
\[
\begin{aligned}
\nabla U_\theta(y^{(i)})
&=\sum_{m=1}^p \theta_m\,\zeta_m^{(i)},\\
\delta^*\nabla U_\theta(y^{(i)})
&=\sum_{m=1}^p \theta_m\,\delta^*\zeta_m^{(i)}.
\end{aligned}
\]
Hence the map $A:\mathbb{R}^p\to C^0(G;\F)^N$ defined by
\[
A\theta:=
\bigl(
\delta^*\nabla U_\theta(y^{(1)}),\dots,
\delta^*\nabla U_\theta(y^{(N)})
\bigr),
\]
is linear. In the noiseless setting,
\[
A\theta_*=(f_\mathrm{obs}^{(1)},\dots,f_\mathrm{obs}^{(N)}).
\]
Thus $\theta_*$ is uniquely determined by the data if and only if
$A\theta=A\theta_*$ implies $\theta=\theta_*$, i.e., if and only if
$\ker A=\{0\}$.

Equip $C^0(G;\F)^N$ with the product inner product and $\mathbb{R}^p$ with its
standard Euclidean inner product. Then, for every $v\in\mathbb{R}^p$,
\[
\langle v,\Gamma v\rangle
=
\langle v,A^*Av\rangle
=
\langle Av,Av\rangle
=
\|Av\|^2.
\]
Therefore $\Gamma=A^*A$ is positive definite if and only if
$Av\neq 0$ for every $v\neq 0$, equivalently, if and only if $\ker A=\{0\}$.

\emph{\textnormal{(b)} Nonlinear parameterization.}
Let
\[
J:=D\mathcal{O}_N(\theta_*):\mathbb{R}^p\to C^0(G;\F)^N.
\]
Assume $J$ is injective. Then $J$ is an isomorphism from $\mathbb{R}^p$ onto the
$p$-dimensional subspace $\operatorname{im}J\subseteq C^0(G;\F)^N$, so there exists a
linear map
\[
P:C^0(G;\F)^N\to\mathbb{R}^p
\]
such that $PJ=I_{\mathbb{R}^p}$. Define $F:=P\circ\mathcal{O}_N$. Then $F$ is $C^1$ and
\[
DF(\theta_*)=P\,D\mathcal{O}_N(\theta_*)=PJ=I_{\mathbb{R}^p}.
\]
By the inverse function theorem, $F$ is locally one-to-one near $\theta_*$. Hence, if
$\theta_1,\theta_2$ are sufficiently close to $\theta_*$ and
$\mathcal{O}_N(\theta_1)=\mathcal{O}_N(\theta_2)$, then
\[
F(\theta_1)=P\mathcal{O}_N(\theta_1)=P\mathcal{O}_N(\theta_2)=F(\theta_2),
\]
so $\theta_1=\theta_2$. Thus $\mathcal{O}_N$ is locally injective at $\theta_*$, and
$\theta_*$ is locally identifiable.

Finally, by definition of $S_i$,
\[
Jv=(S_1v,\dots,S_Nv)
\qquad\text{for all }v\in\mathbb{R}^p.
\]
Therefore
\[
\begin{aligned}
\|Jv\|^2
&=
\sum_{i=1}^N \|S_iv\|_{C^0}^2\\
&=
\sum_{i=1}^N \langle v,S_i^*S_iv\rangle\\
&=
\left\langle v,\left(\sum_{i=1}^N S_i^*S_i\right)v\right\rangle\\
&=
\langle v,\mathcal{I}_N v\rangle.
\end{aligned}
\]
Hence $\mathcal{I}_N$ is positive definite if and only if $J$ is injective, which is
the claimed equivalence.
\end{proof}

\subsection{Additional experimental details}

All node and edge stalks are two-dimensional; trajectories are integrated with step size $0.01$. Rollout errors compare node trajectories; force-law errors compare the recovered interaction in edge space. Results for Experiments~2 and~3 are reported as mean $\pm$ standard deviation over eight independent seeds, except the deterministic augmented-basis row in Table~\ref{tab:exp_basis}.

\subsection*{Experiment 1: Setup Details}

The experiment uses three agents on a directed $3$-cycle. The shifted-quadratic edge potential is
\[
    U_e(y_e; b_e) = \tfrac{1}{2}\|y_e - b_e\|^2, \qquad \nabla U_e(y_e; b_e) = y_e - b_e.
\]
The true system is simulated for $4$ seconds; its terminal state defines the target formation. Agents are then reset and coordinated with two candidate laws: the recovered-edge-law model, and a rollout-only alternative $\widetilde{\Phi}(y) = y + \beta c$ with $\beta = 0.6$ and $c = (1,0)$ on each edge. Sheaf~A uses identity tail maps ($\dim H^1(G;\mathcal{F}) = 2$); Sheaf~B uses rotated tail maps ($H^1(G;\mathcal{F}) = 0$). On Sheaf~A, the perturbation $\beta c \in \ker\delta^*$ is invisible to trajectory observations; on Sheaf~B, the same perturbation alters the rollout and the terminal formation. Numerical results are reported in Table~\ref{tab:exp_nonparametric}.

\subsection*{Experiment 2: Setup Details}

We use the rotated $3$-cycle sheaf ($H^1(G;\mathcal{F})=0$) and set the true threshold $\epsilon_\star = 1$. Broad sampling uses initial-condition scales $(0.4, 0.8, 1.2)$, placing edge states below, near, and above the threshold; localized sampling uses only a thin annulus near the cutoff. Residuals are either observed directly or estimated by finite differences after adding Gaussian node noise with standard deviation $5\times10^{-3}$. Table~\ref{tab:exp_bc} reports recovery error, rollout RMSE, and information number $\mathcal{I}_N$ across conditions. Force-law MSE on pooled and grid evaluation sets is reported in Table~\ref{tab:exp_force_checks}. On the $5$-cycle under the same sheaf, broad sampling with observed residuals gives $|\hat{\epsilon}-\epsilon_\star| = 2.21\times10^{-9}$, while localized sampling with finite-difference residuals gives $2.54\times10^{-2}$.

\subsection*{Experiment 3: Setup Details}

We use the identity $3$-cycle sheaf ($\dim H^1(G;\mathcal{F}) = 2$) with the monomial edge law
\begin{fitdisplaymath}
    \Phi_\theta(y_e) = y_e\bigl(\theta_1 + \theta_2\|y_e\|^2 + \theta_3\|y_e\|^4\bigr), \qquad \theta_\star = (1,\, 0.25,\, 0.03).
\end{fitdisplaymath}
For finite-difference residuals, Gaussian node noise with standard deviation $10^{-4}$ is added. The augmented basis appends a constant force $c \in \ker\delta^*$ that is invisible to $\delta^*$ and therefore produces a singular $\Gamma$. Table~\ref{tab:exp_basis} reports relative parameter error, rollout RMSE, and $\lambda_{\min}(\Gamma)$ across conditions. Table~\ref{tab:exp_force_checks} shows that models with limited coverage can match holdout trajectories while failing on pooled observed edge states and a reference grid. On the identity $5$-cycle, the relative parameter errors for the correct/broad/Obs., augmented/Obs., and correct/limited/Obs.\ conditions are $1.15\times10^{-12}$, $4.36\times10^{-1}$, and $1.18\times10^{-1}$, respectively.

\FloatBarrier

\begin{table}[!b]
\caption{Nonparametric ambiguity on cycle sheaves. Rollout difference measures node trajectories; force MSE measures edge laws.}
\label{tab:exp_nonparametric}
\centering
\footnotesize
\setlength{\tabcolsep}{4pt}
\renewcommand{\arraystretch}{1.05}
\begin{adjustbox}{max width=\linewidth}
\begin{tabular}{@{}lccc@{}}
\hline
Sheaf & $\dim H^1$ & Max rollout diff. & Force MSE \\
\hline
3-cycle, Sheaf A & 2 & $7.73\times 10^{-16}$ & $1.08$ \\
3-cycle, Sheaf B & 0 & $7.60\times 10^{-1}$  & $1.08$ \\
5-cycle, Sheaf A & 2 & $7.47\times 10^{-16}$ & $1.80$ \\
5-cycle, Sheaf B & 0 & $8.07\times 10^{-1}$  & $1.80$ \\
\hline
\end{tabular}
\end{adjustbox}
\end{table}

\begin{table}[!b]
\caption{Recovery of the bounded-confidence threshold on the rotated $3$-cycle sheaf. Obs. = observed residuals; FD = finite differences. Entries are mean $\pm$ standard deviation over eight seeds.}
\label{tab:exp_bc}
\centering
\scriptsize
\setlength{\tabcolsep}{3.5pt}
\renewcommand{\arraystretch}{1.05}
\begin{adjustbox}{max width=\linewidth}
\begin{tabular}{@{}lccc@{}}
\hline
Setting & $|\hat{\epsilon}-1|$ & Rollout RMSE & $\mathcal I_N$ \\
\hline
Broad / Obs. & $(2.43 \pm 1.05)\times 10^{-9}$ & $(1.43 \pm 1.22)\times 10^{-9}$ & $(2.03 \pm 0.30)\times 10^{3}$ \\
Localized / Obs. & $(5.74 \pm 3.07)\times 10^{-9}$ & $(9.67 \pm 5.13)\times 10^{-10}$ & $(2.08 \pm 0.09)\times 10^{2}$ \\
Broad / FD & $(1.08 \pm 0.54)\times 10^{-3}$ & $(6.16 \pm 5.81)\times 10^{-4}$ & $(1.99 \pm 0.29)\times 10^{3}$ \\
Localized / FD & $(1.88 \pm 3.12)\times 10^{-1}$ & $(8.56 \pm 1.98)\times 10^{-4}$ & $(2.86 \pm 0.09)\times 10^{2}$ \\
\hline
\end{tabular}
\end{adjustbox}
\end{table}

\begin{table}[!b]
\caption{Recovery in the finite basis class on the identity $3$-cycle sheaf. Obs. = observed residuals; FD = finite differences. Entries are mean $\pm$ standard deviation over eight seeds, except the deterministic augmented-basis row.}
\label{tab:exp_basis}
\centering
\scriptsize
\setlength{\tabcolsep}{3.5pt}
\renewcommand{\arraystretch}{1.05}
\begin{adjustbox}{max width=\linewidth}
\begin{tabular}{@{}lccc@{}}
\hline
Setting & Rel. param. err. & Rollout RMSE & $\lambda_{\min}(\Gamma)$ \\
\hline
Correct / Broad / Obs. & $(2.07 \pm 0.32)\times 10^{-12}$ & $(1.16 \pm 0.13)\times 10^{-13}$ & $(4.12 \pm 0.79)\times 10^{3}$ \\
Augmented / Obs. & $4.36\times 10^{-1}$ & $1.15\times 10^{-13}$ & $0$ \\
Correct / Limited / Obs. & $(1.67 \pm 0.20)\times 10^{-1}$ & $(7.68 \pm 1.03)\times 10^{-9}$ & $(9.29 \pm 6.67)\times 10^{-17}$ \\
Correct / Broad / FD & $(1.20 \pm 0.71)\times 10^{-1}$ & $(3.06 \pm 1.56)\times 10^{-3}$ & $(2.33 \pm 0.30)\times 10^{3}$ \\
Correct / Limited / FD & $8.41 \pm 5.79$ & $(2.23 \pm 1.74)\times 10^{-6}$ & $(6.84 \pm 5.26)\times 10^{-17}$ \\
\hline
\end{tabular}
\end{adjustbox}
\end{table}

\begin{table}[!b]
\caption{Median force-law MSE on three evaluation sets for the $3$-cycle. The pooled set and reference grid expose errors hidden by holdout trajectory error.}
\label{tab:exp_force_checks}
\centering
\scriptsize
\setlength{\tabcolsep}{3.2pt}
\renewcommand{\arraystretch}{1.05}
\begin{adjustbox}{max width=\linewidth}
\begin{tabular}{@{}lccc@{}}
\hline
Setting & Holdout & Pooled & Grid \\
\hline
\multicolumn{4}{@{}l}{\emph{Bounded-confidence threshold recovery}} \\
Broad / Obs. & $3.99\times 10^{-19}$ & $6.78\times 10^{-19}$ & $3.67\times 10^{-19}$ \\
Localized / Obs. & $1.21\times 10^{-18}$ & $6.69\times 10^{-18}$ & $3.62\times 10^{-18}$ \\
Broad / FD & $6.55\times 10^{-8}$ & $1.85\times 10^{-7}$ & $1.00\times 10^{-7}$ \\
Localized / FD & $1.30\times 10^{-6}$ & $9.41\times 10^{-5}$ & $5.25\times 10^{-5}$ \\
\hline
\multicolumn{4}{@{}l}{\emph{Finite basis class}} \\
Correct / Broad / Obs. & $6.97\times 10^{-25}$ & $1.03\times 10^{-24}$ & $1.34\times 10^{-24}$ \\
Augmented / Obs. & $7.50\times 10^{-1}$ & $7.50\times 10^{-1}$ & $7.50\times 10^{-1}$ \\
Correct / Limited / Obs. & $4.15\times 10^{-15}$ & $6.53\times 10^{0}$ & $5.26\times 10^{0}$ \\
Correct / Broad / FD & $1.08\times 10^{0}$ & $8.76\times 10^{-1}$ & $4.91\times 10^{-2}$ \\
Correct / Limited / FD & $1.25\times 10^{-10}$ & $9.78\times 10^{2}$ & $3.19\times 10^{3}$ \\
\hline
\end{tabular}
\end{adjustbox}
\end{table}

\end{document}